\newtheorem{definition}{Definition}
\newtheorem{theorem}{Theorem}
\newtheorem{lemma}[theorem]{Lemma}
\title{Methodologies for Selection of Optimal Sites for Renewable Energy Under a Diverse Set of Constraints and Objectives}
 \author{Arunabha Sen\textsuperscript{*}, Christopher Sumnicht\textsuperscript{*}, Sandipan Choudhuri\textsuperscript{*}, Suli Adeniye\textsuperscript{*}, Amit B. Sen\textsuperscript{$\dagger$}\\
  \textsuperscript{*}Arizona State University\\ \textsuperscript{$\dagger$}Palantir Technologies\\
  {{\{asen, csumnich, s.choudhuri, sadeniye\}@asu.edu, amit@rilke.us}} 
}
\date{}
\begin{document}
\maketitle
\begin{abstract}
In this paper, we present methodologies for optimal selection for renewable energy sites under a different set of constraints and objectives. We consider two different models for the site-selection problem - coarse-grained and fine-grained, and analyze them to find solutions. We consider multiple different ways to measure the benefits of setting up a site. We provide approximation algorithms with a guaranteed performance bound for two different benefit metrics with the coarse-grained model. For the fine-grained model, we provide a technique utilizing Integer Linear Program to find the optimal solution. We present the results of our extensive experimentation with synthetic data generated from sparsely available real data from solar farms in Arizona. 
\end{abstract}

\subsubsection*{Keywords:}

Optimal site selection, Renewable energy, Seasonal variation in supply and demand, Network Flow, Mathematical Programming


\section{Introduction}

In alignment with the goal of reducing $CO_2$ emissions in the power sector, most utility companies in the USA are currently in the process of decommissioning fossil fuel-based energy generating stations.  Salt River Project (SRP), the utility company providing electricity to the Phoenix metropolitan area, has an aggressive plan to decommission a number of its units in the next few years. SRP has announced that it will decommission four of its existing plants by 2032, \textcite{srpGridWater2023}.
While decommissioning of these units will certainly reduce generation, electricity demand in cities like Phoenix is growing significantly due to migration from other states to Arizona, making Maricopa County, where Phoenix is located, the fastest growing county in the country \textcite{srpNTP2023}.
Moreover, the large-scale introduction of electric vehicles in the next few years will significantly increase energy demand \textcite{srp2023}
Under the circumstances, the role of renewable energy in compensating for the decrease in fossil fuel-based energy generation becomes paramount.  Solar and wind power present themselves as practical renewable energy alternatives, particularly in southwestern states like Arizona. Because of an abundance of sunlight in Arizona, several privately owned solar farms have appeared in the Arizona landscape and are currently generating a moderate amount of energy. In 2023 and 2024, more than 1000 MW of additional renewable energy is expected to be available to SRP for distribution to its consumers, indicating future growth of solar facilities \textcite{srpNTP2023}.  SRP is currently very actively engaged in necessary infrastructure expansion taking into account various socioeconomic and sustainability impacts of such an expansion \textcite{srpISP2023}.

\begin{figure*}[tbh]
\centering
\subfloat[]{\label{fig:1}\includegraphics[width=0.30\textwidth]{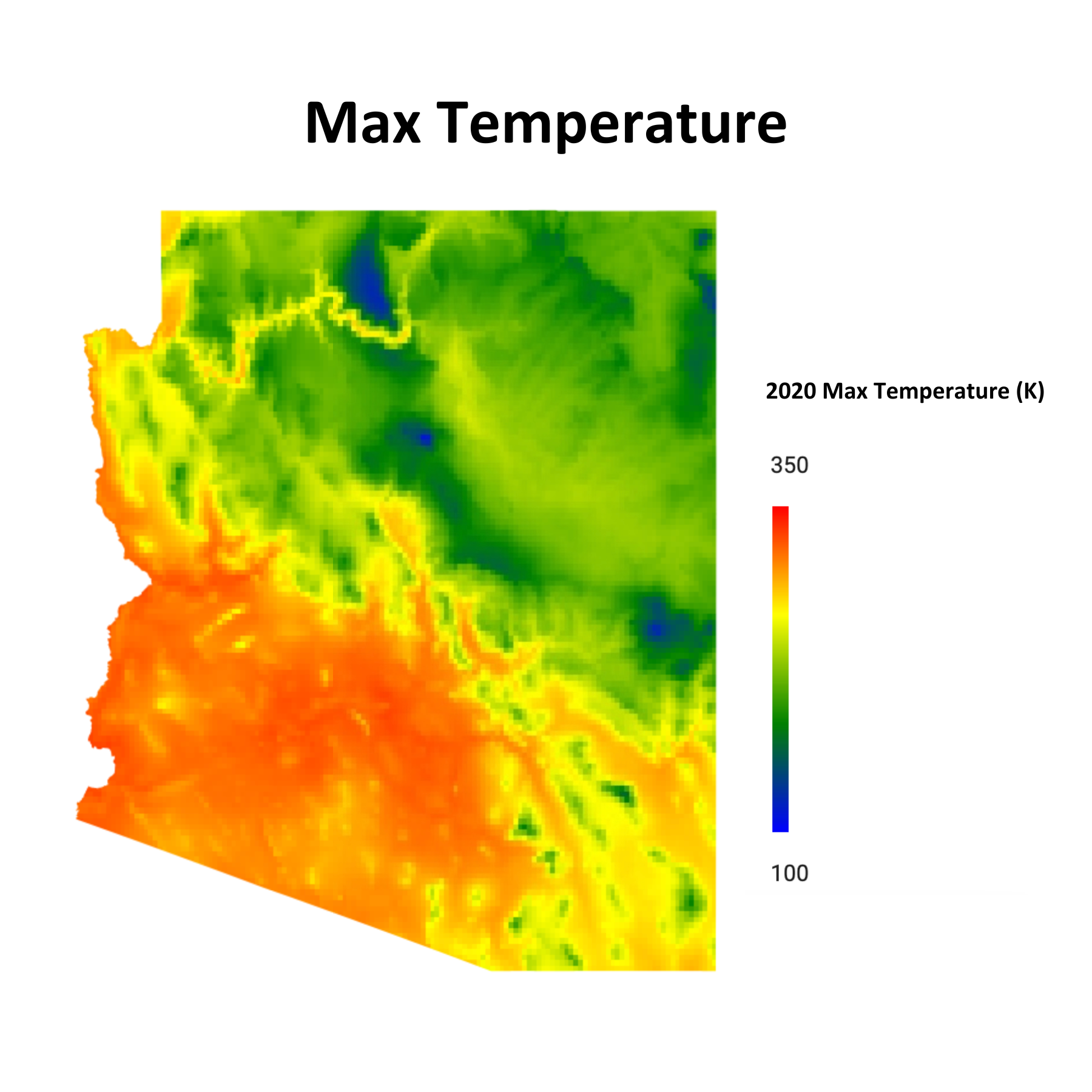}}\hfill
\subfloat[]{\label{fig:2}\includegraphics[width=0.30\textwidth]{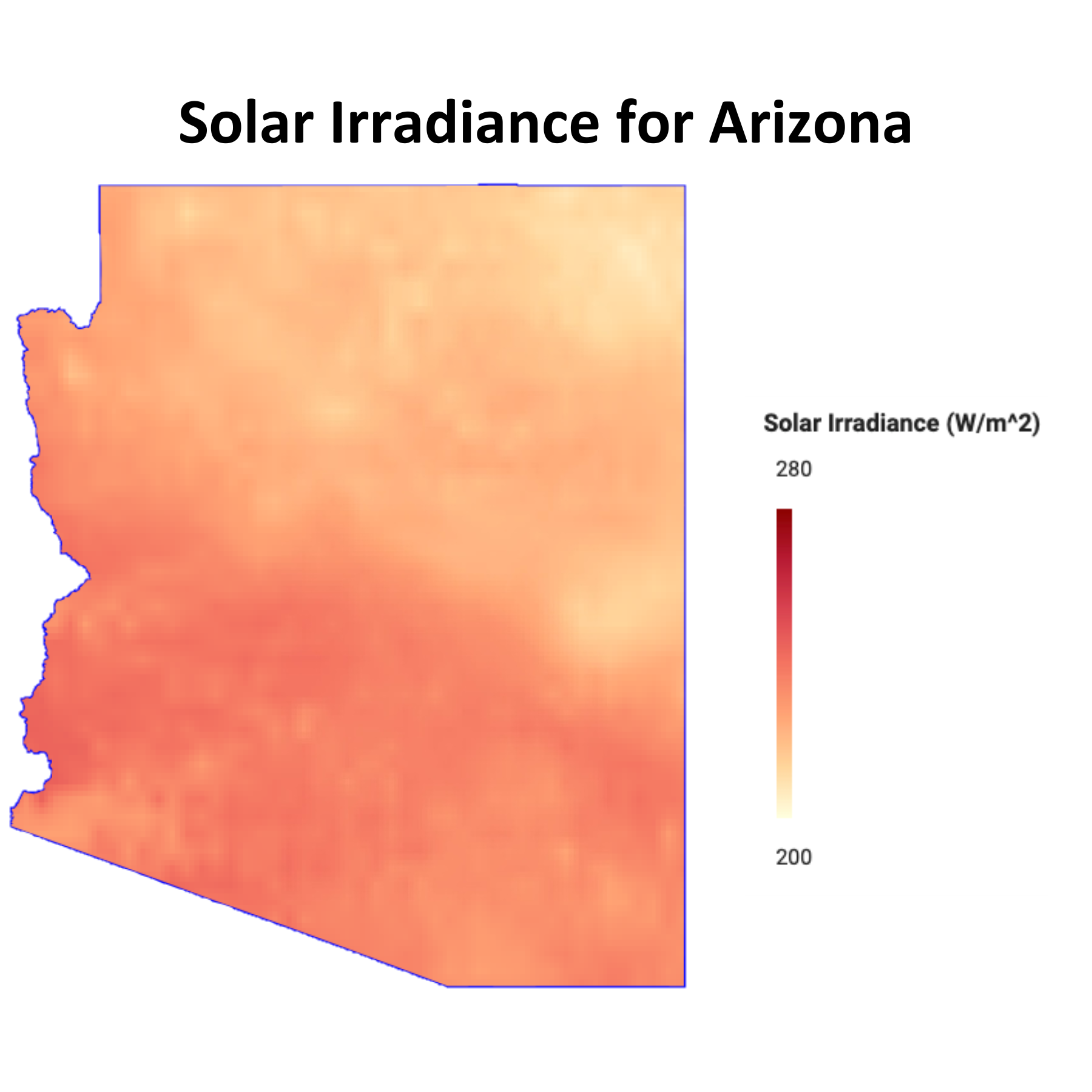}}\hfill
\subfloat[]{\label{fig:3}\includegraphics[width=0.30\textwidth]{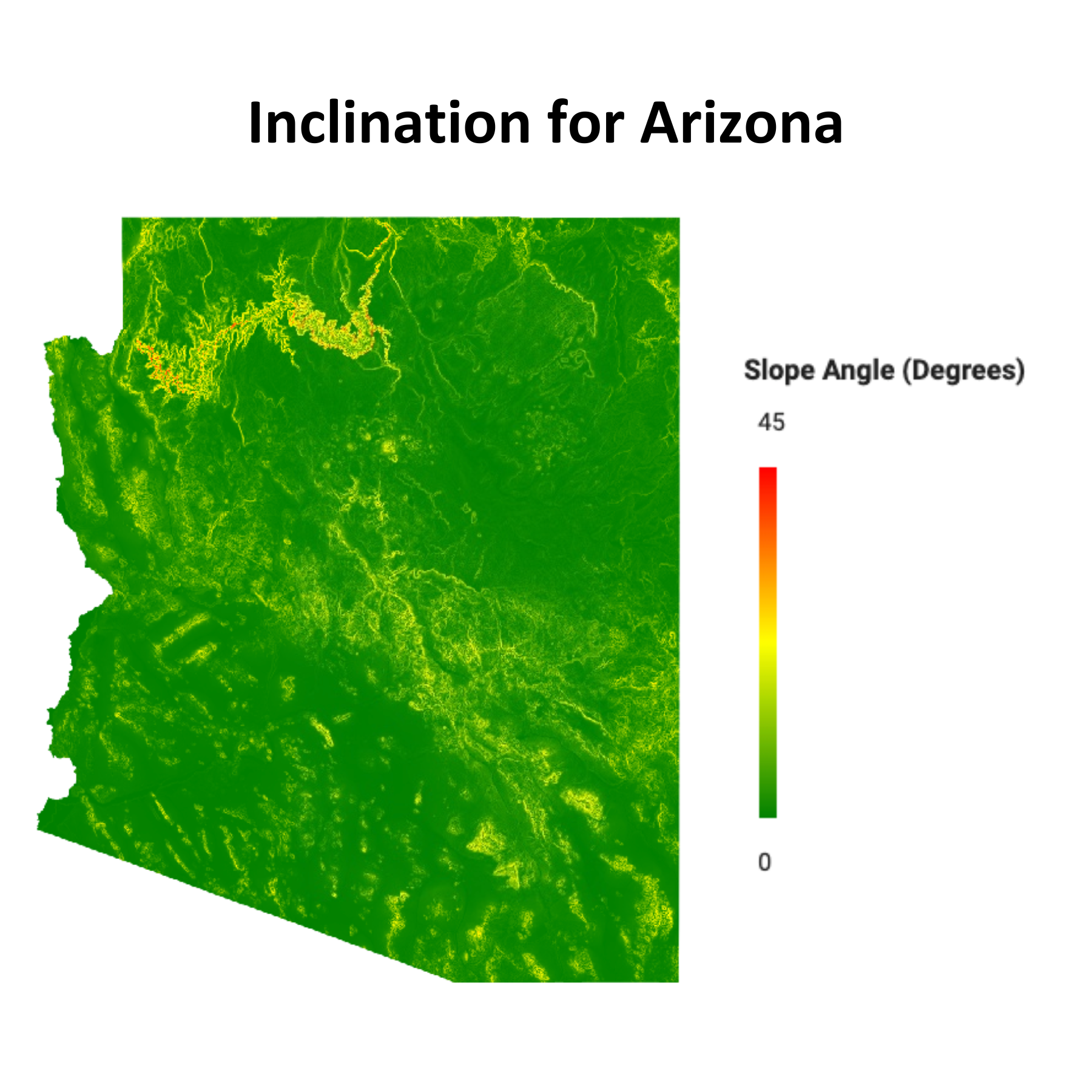}}
\caption{Arizona Climate and Orographic Characteristics}
\label{GIS}
\vspace{-5mm}
\end{figure*}

Identification of suitable sites for setting up solar/wind farms is critical, as poor site choices will have serious repercussions on meeting economic, societal and energy demands. Accordingly, the site selection process must take into account a multitude of factors, such as Climate: (Solar irradiation, Average temperature, Annual sunshine hours), (ii) Environment: (Land accessibility, Land use), (iii) Orography: (Land slope, Elevation, Land orientation), (iv) Location: (Distance to urban areas, transmission lines, substations),  (v) Economic: (Return on investment, Net profit, Land cost), (vi) Social: (Agricultural effect, Public interest, Public security), and (vii) Risk: (Environmental risk, Time-delay risk).

In the last few years, a large number of studies on optimal site selection via Multi-Criteria Decision Making (MCDM) have appeared in the scientific literature, 
\textcite{
turk2021multi, 
barzehkar2021decision, 
}.
Although different tools and techniques are available to solve the MCDM problems, the techniques of choice for researchers in this domain are Geographic Information System (GIS), \textcite{demers2009fundamentals, guaita2019analyzing} and Analytical Hierarchy Process (AHP), \textcite{saaty1994fundamentals, koc2019multi}. The GIS-based approaches take into account factors such as {\em Terrain Inclination}, {\em Solar Irradiance} and {\em Temperature}. Terrain Inclination is taken as a factor as the inclination or slope angle affects solar panels' efficiency by altering the angle at which sunlight strikes the panel surface. Generally, flatter terrains are preferable for solar power plant construction.  Although the AHP-based approaches are also extensively used,  it is well known that the AHP has limitations and may produce inconsistent results. A comprehensive evaluation of the AHP, based on an empirical investigation and objective testimonies by 101 researchers, found at least 30 flaws in the AHP and found it unsuitable for complex problems, and in certain situations even for small problems, \textcite{munier2021uses}.

Although both GIS and AHP-based approaches have limitations, they do capture certain aspects that are useful in the optimal site selection process. In this paper, we propose a Mathematical Programming+Network Flows-based technique to complement the GIS+AHP-based approach undertaken by many other researchers.  Our process comprises a two-phase scheme
where during the first phase, utilizing GIS and AHP, we conduct a preliminary site selection, and during the second phase, final site selection is conducted utilizing Network Flows and Mathematical Programming, \textcite{schrijver1998theory}.

In Phase I part of the analysis, for {\em Terrain Inclination} we used the United States Geological Survey National Elevation Dataset (NED) as the source for topographic data. To calculate slope angles across Arizona, Google Earth Engine's terrain algorithms were employed, resulting in a comprehensive slope dataset. For {\em Solar Irradiance} we used the University of Idaho Gridded Surface Meteorological Dataset (GRIDMET) to analyze solar irradiance across Arizona. We filtered this dataset for 2022 and selected the surface downward shortwave radiation $(W/m^2)$ variable. We generated an annual mean solar irradiance layer and clipped it to the Arizona boundaries. For {\em Temperature} we used the TerraClimate: Monthly Climate and Climatic Water Balance for Global Terrestrial Surfaces dataset filtered for 2020 and selected the maximum temperature (tmmx) variable. We generated an annual mean and clipped it to the Arizona boundaries. The computational results of {\em Terrain Inclination}, {\em Solar Irradiance}, {\em Temperature}of Arizona utilizing the datasets mentioned earlier are shown in Figure \ref{GIS}.

In our two-phase approach, we use currently existing techniques for Phase I and the novelty of our approach lies in the Phase II. In the following, we list our contributions: 
\begin{itemize}
\item We propose a Mathematical Programming and Network Flow based-technique to refine the site selection process after the preliminary site selection process is completed through GIS+AHP-based approaches in Phase I
\item We provide two different versions of optimal site election problem - {\em Coarse Grained} and {\em Fine Grained}, where the difference between the two is that the former doesn't take into account {\em transmission infrastructure cost and capacity}, while the later does
\item  For the Coarse Grained version of the problem, we provide three different metrics to measure the {\em benefit} (or the {\em utility}) of selected sites, namely {\em Interval Utility}, {\em Cumulative Sub-Interval Utility} and {\em Minimum Sub-Interval Utility}. These metrics provide options for the decision-makers to measure utility in the short or the long term.
\item We provide $(1 - 1/e)$ approximation algorithms for the computation of (i) Interval Utility and (ii) Cumulative Sub-Interval Utility
\item We show that the Minimum Sub-Interval Utility function  is not sub-modular and as such a simple greedy algorithm with a guaranteed performance bound may not exist for this problem, as it does for the Cumulative Sub-Interval Utility problem
\item For the Fine Grained version, we provide an Integer Linear Program to find the optimal sites
\item We provide results of our extensive evaluation to study the impact of the {\em infrastructure design budget}, {\em energy supply and demand} on the percentage of the {\em energy demand} that can be met with these additional renewable energy sites.  
\end{itemize}

The objective of our exercise is to find the locations of {\em optimal sites} subject to various sets of constraints, such as energy demand, design budget, etc. The difference between the Coarse-Grained and Fine-Grained versions of the problem is that while the former doesn't take into account the transmission line setup cost and the transmission line capacity, the latter does take these two factors into account.

\section{Related Work}





In recent studies, researchers have employed Geographic Information System (GIS) databases to analyze various criteria for the sustainable development of solar photovoltaic power. \textcite{guaita2019analyzing} utilize GIS to assess factors such as land availability, solar radiation potential, and proximity to infrastructure, enabling the identification of suitable areas for solar farm establishment. Similarly, \textcite{turk2021multi} focuses on the specific case of the Erzurum province in Turkey.

Another line of research concentrates on determining the key criteria for selecting optimal sites for solar farms. \textcite{kereush2017determining} propose a comprehensive methodology that combines GIS analysis and multi-criteria decision-making techniques to evaluate environmental, technical, and economic factors. \textcite{deveci2021evaluation} focus on evaluating criteria for solar photovoltaic projects using fuzzy logarithmic additive estimation of weight coefficients and GIS. 

GIS-based multi-criteria decision-making (MCDM) techniques have been employed in different regions to identify suitable locations for solar farms. \textcite{mierzwiak2017multi} perform a multi-criteria analysis for solar farm location suitability by developing a comprehensive framework that considers solar irradiation, land use, and environmental constraints to identify suitable locations for solar farms. 


In addition to solar energy, \textcite{barzehkar2021decision} apply GIS and MCDA techniques to select wind and solar farm sites in Isfahan Province, Iran. 
The work by \textcite{koc2019multi} focuses on the multi-criteria site selection problem for wind and solar energy projects in Igdir Province, Turkey. They propose an approach based on Geographic Information Systems (GIS) and the Analytic Hierarchy Process (AHP) to evaluate different sites.

\section{Problem Formulation}

\label{ProbForm}
In this section, we discuss both the {\em Coarse-Grained} and {\em Fine-Grained} models  
for the optimal site selection problem
and analyze the models to provide solutions to the problem. 
It may be noted that the distinction between the two models is that the Fine-Grained model use the last two parameters $LCost_{i,j}$  and 
$LCap_{i,j}$, while the Course-Grained model doesn't. The goal of both the models is to satisfy the largest percentage of the demand created at the demand (or load) points, subject to various constraints, such as {\em generation capacity} of each site $SCap_i$, facility setup cost for each site $SCost_i$ and the design budget $B$. Both the models take into account the variation of generation at site $S_i$ and the variation of demand (load) at load point $L_j$ over the time interval $0$ to $T$ (for instance, a time interval may correspond to a year, comprising multiple sub-intervals which may correspond to a month).

\subsection{Coarse-Grained Model and Analysis}

In this model, we assume that it is economically viable for a renewable energy generation facility to supply energy only to nearby {\em demand (load) points}. More precisely, we assume that a facility $S_i$ can supply energy only to those demand points that are within a circle of radius $R$. Our problem setting is illustrated on a map of a generic geographic area in Figure \ref{CoarseModel} where the blue circular dots indicate the {\em potential} locations ($PL_1$ through $PL_{10}$) of renewable energy generation sites and the red triangular dots indicate the location of demand points ($I_1$ through $I_{16}$). The circle with center at $PL_9$ indicates its {\em service range}. The solution is to select sites that maximizes a fraction of the demand, subject to various constraints if energy generation facilities are built in those locations. In Figure \ref{CoarseModel}, the blue circles in square boxes (sites $SL_1$ through $SL_4$) indicate the sites selected for that problem instance.

{

\begin{figure}[t]
	\begin{center}
	\vspace{-4.0mm}
		\includegraphics[width = 0.48\textwidth, keepaspectratio]{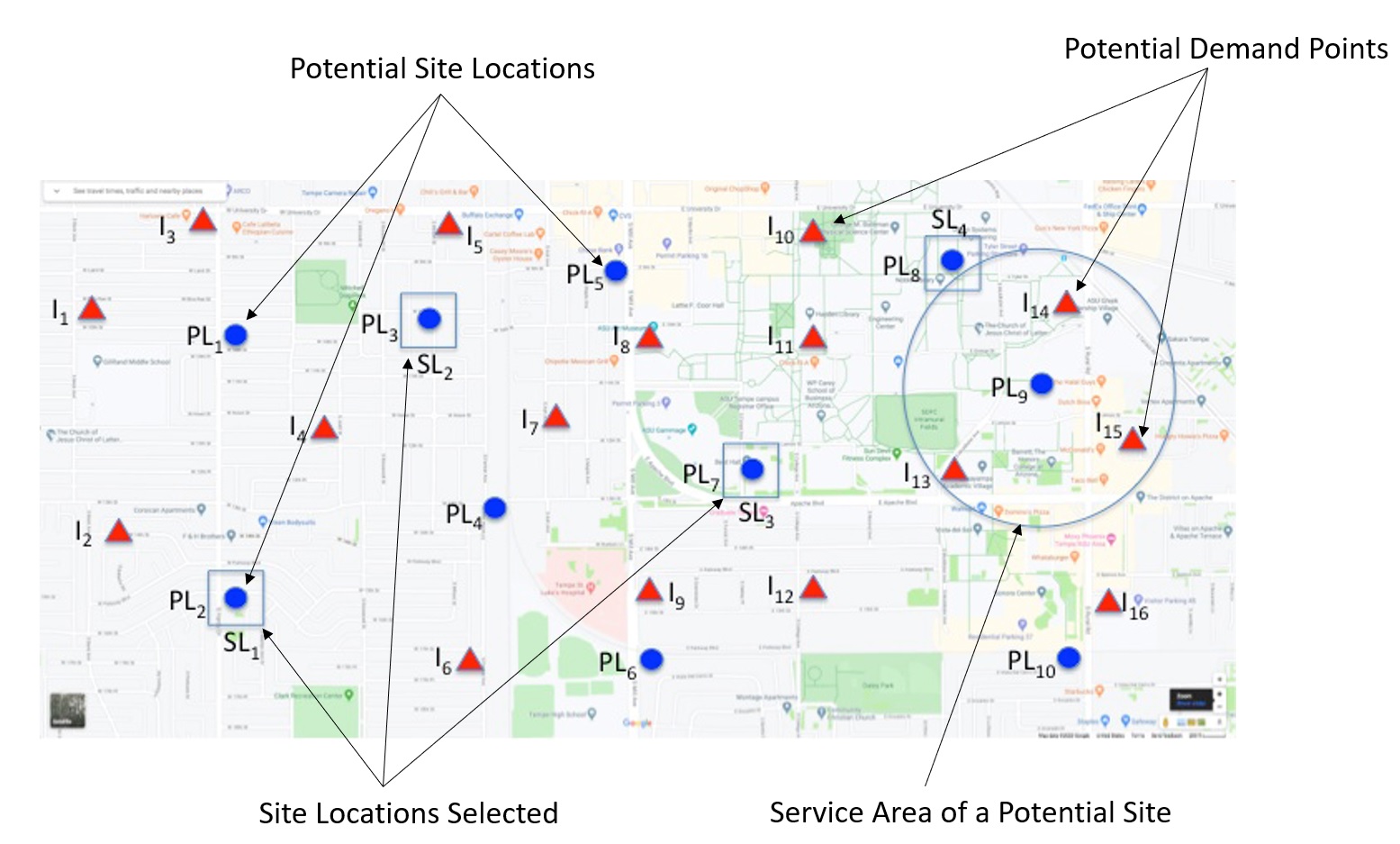}
	    \vspace{-4.0mm}
        \caption{Potential Generation Sites and Actual Demand Points}
        \vspace{-4.0mm}
		{\label{CoarseModel}}       
	\end{center}
 \vspace{-4.0mm}
\end{figure}

The goal of maximizing the fraction of the total demand is over a {\em time interval} $T$, which is divided evenly into $r$ sub-intervals $T_i, 1 \leq i \leq r$.  The time {\em interval} $0$ to $T$ is the observation period.  This implies $\sum_{i = 1}^r T_i = T$. 
\\

We define an {\em area of interest} $A$, where all demand (load) points, $l_i, 1 \leq i \leq n$, and potential renewable energy generation sites, $L_i, 1 \leq i \leq m$, are located. 
The service area associated with a facility at location $L_i$ will be denoted by $SA(L_i)$. In this Coarse-Grained Model, we make a few simplifying assumptions: (i)  the cost of building facilities at every location is {\em identical} and it is of {\em unit cost} ( $SCost_i = 1, \forall i, 1 \leq i \leq n$), (ii)  the demand at each demand point is either 0 or 1, (iii) the {\em budget} for building the infrastructure is an integer $B$. Accordingly, in this model, all subsets of the set of locations $L = \{L_1, \ldots, L_m\}$ of cardinality at most $B$ are potential solutions to the site selection problem.  It may be noted that an improper subset of the potential solutions will maximize the fraction of the total demand that can be met.
The service area associated with a $L' \subseteq L$ will be denoted by $SA(L')$ and is defined as \(SA(L') = \cup_{L_i \in L} SA(L_i)\). The goal of the site selection problem is to identify the subset $L' \subseteq L$ that {\em maximizes} the ``benefit'', subject to the constraint that $|L'| \leq B$, where $B$ is the budget.   It may be noted that in this context, the ''benefit'' of setting up facilities at certain locations may be considered a fraction of the total demand that can be met by setting up facilities at those locations. 


We have defined the goal of the site selection problem as to identify locations for setting up renewable energy generation facilities (from among the set of potential locations), which maximizes ``benefit'' subject to budgetary constraints. Delving deeper, one has to address the question of how the ``benefit'' of setting up a generation facility should be measured over a period of time. In the following, we lay out three different metrics, any one of which can be used to measure the benefit of setting up a facility. We leave it up to the policymakers to decide as to which metric is most appropriate for their environment. In addition to laying out metrics, we also provide algorithms to find facilities that will maximize benefits. 
The formal definition of these metrics is given below.


\begin{definition}
Interval Utility of a subset of a location set  $L' \subseteq L$, within the time period $T$, $IU(L', T)$), is defined as the {\em ratio} between the number of demand points with the demand  of one unit during time interval $T$ that can be met by the generating facilities located at $L'$, divided by the total demand  during time $T$, i.e.,   \[IU(L', T) = |I(SA(L'), T)|/|I(A, T)|\] where $|I(SA(L'), T)|$ represents the number of demand points in the service area of $L'$with demand of one unit during time interval $T$ and $|I(A, T)|$ is the total demand in the area of interest $A$ during time interval $T$.  
\end{definition}

The Interval Utility metric $IU(L', T)$ pays attention to the demand that can be met by $L'$ over the time interval $T$, without paying detailed attention to its effectiveness in each of time sub-intervals $T_i, 1 \leq i \leq r$. In order to capture the effectiveness of setting up facilities in locations $L'$ in each of time sub-intervals $T_i, 1 \leq i \leq r$, we define the following two metrics.

\begin{definition}
Cumulative Sub-Interval Utility of a subset of a location set  $L' \subseteq L$, over time interval $T$, where $T = \sum_{j = 1}^r T_i$,  denoted by $CSIU(L', T)$, is defined as the sum over $r$ time sub-intervals, the ratio between the demand that could have been served by the facilities at $L'$ during time sub-interval $T_i$, divided by the total number of incidents in $T_i$, i.e.,   \[CSIU(L', T) = \sum_{i = 1}^r (|I(SA(L'), T_i)|/|I (A, T_i)|)\] where $I(SA(L'), T_i)$ represents the demand during $T_i$ that could have been met if facilities existed in locations $L'$, and $|I(A, T_i)|$ is the total demand in the area of interest $A$ during time sub-interval $T_i$.
\end{definition}

Although the Cumulative Sub-Interval Utility metric $CSIU(L', T)$ pays attention to demand that can be met by $L'$ during each of time sub-intervals $T_i, 1 \leq i \leq r$, due to its {\em additive nature}, it may fail to recognize the scenario where the service in a particular time sub-interval $T_i$ is unacceptably poor. In order to remedy this shortcoming of  $CSIU(L', T)$, we define the following metric.

\begin{definition}
Minimum Sub-Interval Utility of subset of location set  $L' \subseteq L$ over time interval $T$, denoted by $MSIU(L', T)$, is defined as the minimum over $r$ time sub-intervals, the ratio between the demand that could have been met by the  facilities at $L'$ during time sub-interval $T_i$, divided by the total demand in $T_i$, i.e.,   \[MSIU(L', T) = \min_{i = 1}^r ~(|I(SA(L'), T_i)|/|I (A, T_i)|)\] where $I(SA(L'), T_i)$ represents the demand during $T_i$ that could have been met if facilities existed in a subset of locations $L'$, and $|I(A, T_i)|$ represents the total demand in the area of interest $A$ during time $T_i$.
\end{definition}

We provide intuition for the preceding metrics using an example:

\vspace{0.1in}
\noindent
{\bf Example:} Consider a problem instance where the budget constraint allows building only one facility. Suppose that $L'= \{l_a\}, L'' = \{l_b\}$, and $L''' = \{l_c\}$ are three different subsets of $L$, and they are being evaluated for their suitability as the ``best choice'' for the facility. Moreover, let $r = 2$ (i.e., only two time sub-intervals $T_1$, and $T_2$ need to be considered). The total demand and the demand met by different choices of locations $L', L''$, and $L'''$ are shown in Table \ref{ExampleTable}. 

\begin{table}[tbh]
		\centering
		\setlength{\tabcolsep}{8pt}
        \renewcommand{\arraystretch}{1.3}
		\vspace{-4.0mm}
		\caption{Demand in time sub-intervals and the demand-met by different subsets of location}
  \begin{adjustbox}{width = 0.47\textwidth}
		\begin{tabular} {| c | c | c|}  \hline 
			     & Sub-interval $T_1$ & Sub-interval $T_2$\\ \hline
               Total Demand &  100 & 10\\ \hline
			    Demand met by $L'$ & 80 & 2\\ \hline
                Demand met by $L''$ & 70 & 4\\ \hline
			    Demand met by $L''$ & 55 & 5\\ \hline
		\end{tabular}
  \end{adjustbox}
        \label{ExampleTable}
        \vspace{-10.0pt}
	\end{table}
	

Whether facilities at $L'$, $L''$, or $L'''$ are built depends on the metric used. In this example, $T_1$ has a demand of $100$ and $T_2$ has a demand of $10$. Therefore, the total demand $T_1 + T_2 = 100 + 10 = 110$. It can be verified that under the Interval Utility metric, building the facility at $L'$ is optimal as $L'$ contributes a utility of $80 + 2 = 82$, which is greater than $L''$ contribution of $70 + 4 = 74$, and $L'''$ of $55 + 5 = 60$. However, if one examines that choice carefully, one finds shortcomings with that solution. The $L'$ solution covers 80\% of the incidences in $T_1$ and only 20\% in $T_2$. This is because the metric pays attention to the entire interval $T_1$ + $T_2$ but does not take in account to individual sub-intervals. If one wants to examine the quality of the solution with respect to each sub-interval $T_1$ and $T_2$, the $L''$ and $L'''$ solutions may be better than the $L'$ solution. If, instead, one uses the Cumulative Sub-Interval Utility, then $L''$ is the best choice as (70/100 + 4/10 = 1.1) is greater than (80/100 + 2/10 = 1.0) for $L''$ and (55/100 + 5/10 = 1.05) for $L'''$.  Similarly, if one wants to use Minimum Sub-Interval Utility, then $L'''$ is the best choice as a minimum (55/100, 5/10)  = 0.5) is greater than the minimum (80/100, 2/10)  = 0.2) for $L'$ and minimum (70/100, 4/10)  = 0.4) for $L''$. This discussion is summarized in Table \ref{UtilityResults}.

\begin{table}[tbh]
		\centering
		\setlength{\tabcolsep}{8pt}
        \renewcommand{\arraystretch}{1.3}
		\vspace{-4.0mm}
		\caption{Demand in time sub-intervals and the number of them met by different subsets of location}
		\begin{adjustbox}{width = 0.47\textwidth}
		\begin{tabular} {| c | c | c| c|}  \hline 
			      & IU($\cdot$, T) & CSIU($\cdot$, T) & MSIU($\cdot$, T)\\ \hline
               $L'$ & (80 + 2)/(100 + 10) = {\bf .75}  & 80/100 + 2/10 = 1 & min (80/100, 2/10) = 0.2\\ \hline
			    $L''$  &  (70 + 4)/(100 + 10) =  .67 & 70/100 + 4/10 = {\bf 1.1} & min (70/100, 4/10) = 0.4\\ \hline
                $L'''$  &  (55 + 5)/(100 + 10) =   .55 & 55/100 + 5/10 =  1.05 & min (55/100, 5/10) = {\bf 0.5}\\ \hline
		\end{tabular}
		\end{adjustbox}
        \label{UtilityResults}
        \vspace{-10.0pt}
\end{table}

\noindent

\begin{definition}
    Optimal Interval Utility (OIU) Solution:
{\em OIU} solution is defined as a subset of a location set  $L' \subseteq L$, that maximizes {\em Interval Utility} $IU(L', T)$, i.e., 
\[OIU\_Solution~=~max_{\{L' \subseteq L, |L'| = B\}} IU(L', T) \] where $IU(L', T)$ is the Interval Utility defined earlier, and B is the budget.
\end{definition}


In the following we first state a well studied combinatorial optimization problem, {\em Maximum Set Cover (MSC)} and then show that the OIU is equivalent to MSC. MSC is NP-complete and a $(1 - 1/e)$ factor approximation algorithm is known for it \textcite{vazirani2001approximation}.

\begin{definition}
Maximum Set Cover (MSC) Problem: Given a set $S = \{s_1, \ldots, s_n\}$ and subsets 
${\cal S} = \{S_1, ..., S_m\}$ ($S_i \subseteq S, 1 \leq i \leq m$) and an integer $B$, find the largest subset $S' \subseteq S$ that can be covered by using a subset ${\cal S}' \subseteq {\cal S}$, where $|{\cal S}'| \leq B$.  
\end{definition}


Let $\mathcal{O}$ be an instance of OIU with time interval $T$ and budget $B$. Denote the locations of the demand points with demand equal to $1$ within the time interval $T$ as $D$, and the potential locations of renewable energy sites as $L = \{L_j : 1 \leq j \leq m\}$.
Each $L_j \in L$ has a covering area (i.e., the demand at these points can be met by the facility located at $L_j$), which in this model is assumed to be a circular area with radius $R$. Each location $L_j \in L$ will have an associated subset $D_j$ of the demand points $D$ (i.e., subset of the demand points whose demands can be met by the facility at $L_j$).


We create an equivalent instance $\mathcal{M}$ with set $S$, $\mathcal{S}$, and $B$ of MSC as defined above from $\mathcal{O}$ in the following way: \\
(i) Set $S := D$.\\
(ii) Set ${\cal S} := {\cal D}$.\\
(iii) Let the $B$ of $\mathcal{M}$ be the same as the $B$ for $\mathcal{O}$.

It is easy to see that $\mathcal{O} \equiv \mathcal{M}$. Hence, the $(1 - 1/e)$ approximation algorithm that exists for the MSC can be used to solve the OIU problem.



\vspace{0.05in}

Next, we focus our attention on the Cumulative Sub-Interval Utility function and show that it is submodular. For ease of reference, we define a {\em submodular} function next.

\begin{definition}
Submodular function: If $\Omega$ is a finite set, a submodular function is a set function, $f: 2^{\Omega} \rightarrow \mathbb{R}$, where 
$2^{\Omega}$ denotes the power set of $\Omega$ which satisfies the following condition:
For every $X, Y \subseteq \Omega$ with $X \subseteq Y$ and every $z \in \Omega \setminus Y$, we have 
$f(X \cup \{z\}) - f(X) \geq f(Y \cup \{z\}) - f(Y)$. (\textcite{nemhauser1978analysis}).
\end{definition}

\begin{lemma}
Cumulative Sub-Interval Utility function of a subset of a location set  $L' \subseteq L$ over  time interval $T$ (see Definition 2) is {\em monotonic}.
\label{label1}
\end{lemma}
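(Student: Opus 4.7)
The plan is to prove monotonicity directly from the definition: I will show that if $L' \subseteq L'' \subseteq L$, then $CSIU(L', T) \leq CSIU(L'', T)$, by tracing the inclusion through each layer of the definition. The whole argument is essentially a chain of ``bigger set of facilities implies bigger service area implies more demand points covered in each sub-interval implies bigger sum.''

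First I would unpack the service area. Since $SA(L'') = \bigcup_{L_i \in L''} SA(L_i)$ and $L' \subseteq L''$, it is immediate that $SA(L') \subseteq SA(L'')$. Next, for any sub-interval $T_i$, the set of demand points located in the smaller service area is a subset of those in the larger one, so $I(SA(L'), T_i) \subseteq I(SA(L''), T_i)$, which yields $|I(SA(L'), T_i)| \leq |I(SA(L''), T_i)|$. I would note briefly that in the degenerate case where $|I(A, T_i)| = 0$ the corresponding term can be taken to be $0$ by convention (or simply dropped from the sum); otherwise $|I(A, T_i)| > 0$ and dividing both sides by this positive quantity preserves the inequality, giving
\[
\frac{|I(SA(L'), T_i)|}{|I(A, T_i)|} \; \leq \; \frac{|I(SA(L''), T_i)|}{|I(A, T_i)|}.
\]

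Finally, summing the preceding inequality over $i = 1, \dots, r$ yields $CSIU(L', T) \leq CSIU(L'', T)$, which is exactly monotonicity.

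There is no real obstacle here; the statement is essentially a bookkeeping lemma. The only subtle point worth being explicit about is handling the possibility that $|I(A, T_i)|$ could be zero (a sub-interval with no demand at all), which I would dispatch with a one-line convention. The result will be invoked together with a separate submodularity argument to justify the $(1 - 1/e)$-approximation guarantee claimed earlier for the Cumulative Sub-Interval Utility problem via the standard greedy algorithm for monotone submodular maximization under a cardinality constraint.
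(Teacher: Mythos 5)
Your proof is correct. Note that the paper states this lemma \emph{without any proof at all} (only Lemma~\ref{label2}, submodularity, receives a proof), so there is nothing to compare against; your chain of inclusions $L' \subseteq L'' \Rightarrow SA(L') \subseteq SA(L'') \Rightarrow I(SA(L'), T_i) \subseteq I(SA(L''), T_i)$, followed by dividing by $|I(A,T_i)|$ and summing over $i$, is the standard argument the authors evidently considered too routine to write down, and your handling of the $|I(A,T_i)| = 0$ edge case is a sensible addition the paper does not address.
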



\begin{lemma}
Cumulative Sub-Interval Utility function, 
\(CSIU({L', T})\) 
is submodular.
\label{label2}
\end{lemma}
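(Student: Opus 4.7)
The plan is to decompose $CSIU(L',T)$ into a nonnegative linear combination of simpler set functions and show each of those simpler functions is submodular by a direct argument about coverage. Since for fixed $T_i$ the denominator $|I(A,T_i)|$ does not depend on $L'$, I would write
\[
CSIU(L',T) \;=\; \sum_{i=1}^{r} \frac{1}{|I(A,T_i)|}\, g_i(L'), \qquad g_i(L') := |I(SA(L'),T_i)|,
\]
and then use two facts: (i) each $g_i$ is submodular, and (ii) nonnegative linear combinations of submodular functions are submodular.

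For (i), I would observe that $g_i$ is essentially a coverage function. For each potential site $L_j \in L$, let $D_j^{(i)}$ denote the set of demand points in the area of interest whose unit demand falls in sub-interval $T_i$ and that lie within the service area $SA(L_j)$. Then by definition $SA(L') = \bigcup_{L_j \in L'} SA(L_j)$, so
\[
g_i(L') \;=\; \Bigl| \bigcup_{L_j \in L'} D_j^{(i)} \Bigr|.
\]
To verify submodularity directly from Definition~4, take $X \subseteq Y \subseteq L$ and $z \in L \setminus Y$. The marginal gain $g_i(X \cup \{z\}) - g_i(X)$ equals the number of demand points in $D_z^{(i)}$ not already covered by $\bigcup_{L_j \in X} D_j^{(i)}$, and similarly $g_i(Y \cup \{z\}) - g_i(Y)$ counts those in $D_z^{(i)}$ not already covered by $\bigcup_{L_j \in Y} D_j^{(i)}$. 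Since $X \subseteq Y$ implies the covered set for $X$ is contained in the covered set for $Y$, every demand point that $z$ newly covers with respect to $Y$ is also newly covered with respect to $X$, giving the required inequality.

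For (ii), I would give a one-line verification: if $f_1,\dots,f_r$ are submodular and $\alpha_1,\dots,\alpha_r \geq 0$, then for $X \subseteq Y$ and $z \notin Y$,
\[
\sum_i \alpha_i \bigl(f_i(X \cup \{z\}) - f_i(X)\bigr) \;\geq\; \sum_i \alpha_i \bigl(f_i(Y \cup \{z\}) - f_i(Y)\bigr),
\]
which is exactly submodularity of $\sum_i \alpha_i f_i$. Applying this with $\alpha_i = 1/|I(A,T_i)| > 0$ and $f_i = g_i$ completes the proof.

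I do not anticipate a genuine obstacle here; the only thing that requires a touch of care is setting up notation cleanly so that $g_i$ is visibly a coverage (union-cardinality) function on the ground set of demand points active in $T_i$, after which the classical submodularity of coverage functions applies verbatim. Monotonicity (already established in Lemma~\ref{label1}) is not needed for this argument, though it would also fall out of the same decomposition.
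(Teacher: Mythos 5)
Your proposal is correct and is essentially the argument the paper gives: the paper's inline computation expands $|I(SA(L'\cup\{L_i\}),T_k)|$ by inclusion--exclusion to show the marginal gain in each sub-interval is the newly covered demand divided by the fixed denominator $|I(A,T_k)|$, which is exactly your observation that each $g_i$ is a coverage function, summed with nonnegative weights. Your packaging (per-interval coverage functions plus closure of submodularity under nonnegative linear combinations) is cleaner and avoids the notational slips in the paper's term-by-term algebra, but the underlying idea is the same.
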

\begin{proof}
In order to claim that \(CSIU({L})\) is submodular, we need to show that for any pair $L', L'' \in L$,   $L' \subseteq L''$ and $L_i, \in {L \setminus L''}, CSIU(L' \cup \{L_i\}) - CSIU(L') \geq  CSIU(L'' \cup \{L_i\}) - CSIU(L'')$.

\vspace{0.1in}
\noindent
$CSIU(L' \cup \{L_i\}) - CSIU(L')$\\
$= \sum_{k = 1}^r \frac{|I(SA(L') \cup \{L_i\}, T_k)|}{|I(A, T_k)|} - \sum_{k = 1}^r \frac{|I(SA(L'), T_k)|}{|I(A, T_k)|}$ \\
$= \sum_{k = 1}^r \frac{|I(SA(L'), T_k)| + |I(SA(\{L_i\}), T_k)| - |I(SA(L' \cap \{L_i\}), T_k)|}{|I(A, T_k)|}$\\
$- \sum_{k = 1}^r \frac{|I(SA(L'), T_k)|}{|I(A, T_k)|}$ \\
$= \sum_{k = 1}^r \frac{|I(SA(L'), T_k)|}{|I(A, T_k|)} +  \sum_{k = 1}^r \frac{|I(SA(\{L_i\}), T_k)|}{|I(A, T_k)|}$\\ 
$-\sum_{k = 1}^r \frac{|I(SA(L') \cap \{L_i\}, T_k)|}{|I(A, T_k)|} - \sum_{k = 1}^r \frac{|I(SA(L'), T_k)|}{|I(A, T_k)|}$\\
$= \sum_{k = 1}^r \frac{|I(SA(L_i), T_k)|}{|I(A, T_k|)} - \sum_{k = 1}^r \frac{|I(SA(L') \cap \{L_i\}, T_k)|}{|I(A, T_k)|}$
$\geq \sum_{k = 1}^r \frac{|I(SA(L_i), T_k)|}{|I(A, T_k)|} - \sum_{k = 1}^r \frac{|I(SA(L'') \cap \{L_i\}, T_k)|}{|I(A, T_k)|}$\\
(as $L' \subseteq L''$, $|I(SA(L'), T_k)| \leq |I(SA(L'' \cap \{L_i\}), T_k)|$)\\
$= \sum_{k = 1}^r \frac{|I(SA(L''), T_k)|}{|I(A, T_k|)} +  \sum_{k = 1}^r \frac{|I(SA(\{L_i\}), T_k)|}{|I(A, T_k)|}$\\ 
$-\sum_{k = 1}^r \frac{|I(SA(L'') \cap \{L_i\}, T_k)|}{|I(A, T_k)|} - \sum_{k = 1}^r \frac{|I(SA(L''), T_k)|}{|I(A, T_k)|}$\\
$= \sum_{k = 1}^r \frac{|I(SA(L''), T_k)| + |I(SA(\{L_i\}), T_k)| - |I(SA(L'' \cap \{L_i\}), T_k)|}{|I(A, T_k)|}$\\
$- \sum_{k = 1}^r \frac{|I(SA(L''), T_k)|}{|I(A, T_k)|}$ \\
$= \sum_{k = 1}^r \frac{|I(SA(L'') \cup \{L_i\}, T_k)|}{|I(A, T_k)|} - \sum_{k = 1}^r \frac{|I(SA(L''), T_k)|}{|I(A, T_k)|}$ \\
$= CSIU(L'' \cup \{L_i\}) - CSIU(L'')$
\end{proof}


It has been shown in \textcite{nemhauser1978analysis} that a greedy heuristic for the problem of maximizing a monotone submodular function subject to a cardinality constraint admits a $ (1 - 1/e)$ approximation algorithm.
\begin{theorem}
A greedy heuristic for the Cumulative Sub-Interval Utility function $(CSIU(L')$ computation problem provides a $(1 - 1/e)$ approximation guarantee. 
\end{theorem}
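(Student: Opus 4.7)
The plan is to reduce the statement to a direct invocation of the classical Nemhauser--Wolsey--Fisher theorem, which the paper has already cited: for any nonnegative, monotone, submodular set function maximized subject to a cardinality constraint, the natural greedy algorithm achieves a $(1 - 1/e)$ approximation. Since all three hypotheses of that theorem have essentially been assembled in the preceding material, the proof amounts to checking that they indeed apply to the problem of maximizing $CSIU(L', T)$ over subsets $L' \subseteq L$ with $|L'| \leq B$.

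First I would frame the optimization problem precisely: the ground set is the set of potential generation locations $L = \{L_1, \ldots, L_m\}$, the feasibility constraint $|L'| \leq B$ is a pure cardinality constraint (exactly the setting to which the Nemhauser--Wolsey--Fisher bound applies), and the objective is the set function $L' \mapsto CSIU(L', T)$. I would also observe that $CSIU(\emptyset, T) = 0$, since $SA(\emptyset) = \emptyset$ implies $|I(SA(\emptyset), T_k)| = 0$ for every sub-interval $T_k$, so nonnegativity and the standard normalization both hold.

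Next I would invoke Lemma 1 (monotonicity of $CSIU$) and Lemma 2 (submodularity of $CSIU$), which together verify the two structural hypotheses required. I would then describe the greedy heuristic explicitly: initialize $L' \leftarrow \emptyset$ and, for $B$ iterations, add the location $L_i \in L \setminus L'$ that maximizes the marginal gain $CSIU(L' \cup \{L_i\}, T) - CSIU(L', T)$; return the final $L'$. This is precisely the greedy procedure analyzed in \textcite{nemhauser1978analysis}.

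Finally I would close by citing the Nemhauser--Wolsey--Fisher theorem to conclude that the value returned by this greedy procedure is at least $(1 - 1/e)$ times the optimum of $CSIU(\cdot, T)$ over all feasible $L'$. There is no genuine obstacle here: the only point worth any care is confirming that the feasibility structure is cardinality (not a more general matroid or knapsack constraint, each of which would give a different bound) and that $CSIU$ is well defined and nonnegative on the empty set; both are immediate from the problem formulation, so the argument is essentially a one-line corollary of the lemmas already proved.
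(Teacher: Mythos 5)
Your proposal is correct and takes essentially the same route as the paper: the paper's proof is a one-line appeal to Lemmas \ref{label1} and \ref{label2} together with the Nemhauser--Wolsey--Fisher result, which is exactly the argument you spell out. The only difference is that you make explicit the routine verifications (cardinality constraint, nonnegativity, normalization at the empty set, and the form of the greedy procedure) that the paper leaves implicit.
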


\begin{proof}
Follows from lemmas \ref{label1}, \ref{label2} and \textcite{nemhauser1978analysis}.
\end{proof}

\vspace{0.1in}
In the following we show that unlike the {\em Cumulative Sub-Interval Utility function}  $CSIU(L',T)$, the {\em Minimum Sub-Interval Utility function} $MSIU(L',T)$ {\em is not submodular}.

\begin{figure}[tbh]
	\begin{center}
	\vspace{-4.0mm}
		\includegraphics[width = 0.5\textwidth, keepaspectratio]{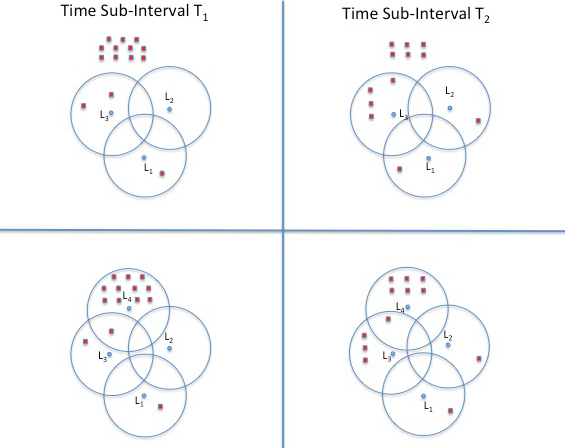}
	    \vspace{-4.0mm}
        \caption{Problem instance to demonstrate that $MSIU(L', T)$ is not submodular}
        \vspace{-4.0mm}
		{\label{MSIU_Example}}       
	\end{center}
\end{figure}

\begin{theorem}
Minimum Sub-Interval Utility function of a subset of a location set  $L' \subseteq L$, 
defined as 
\(MSIU(L', T) = \min_{i = 1}^r ~(|I(SA(L'), T_i)|/|I (A, T_i)|)\)
is not submodular.
\end{theorem}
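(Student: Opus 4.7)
To prove that $MSIU(L', T)$ is not submodular, the strategy is to exhibit an explicit counterexample to the defining inequality of Definition 4. Specifically, I will construct a small instance with $r = 2$ sub-intervals and three potential sites $L_a, L_b, L_c$, pick $L' \subseteq L''$ and an added site $L_i \notin L''$, and verify directly that
\[
MSIU(L' \cup \{L_i\}, T) - MSIU(L', T) \;<\; MSIU(L'' \cup \{L_i\}, T) - MSIU(L'', T).
\]

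The design of the counterexample will exploit the non-smoothness of the $\min$ operator. I will take $L' = \{L_a\}$, $L'' = \{L_a, L_b\}$, and $L_i = L_c$. The service area of $L_a$ is chosen to cover only points with demand in sub-interval $T_2$, and the service area of $L_b$ to cover only points with demand in $T_1$. Consequently $MSIU(\{L_a\}, T) = 0$ because $T_1$ is a hard bottleneck, whereas $MSIU(\{L_a, L_b\}, T)$ is strictly positive because the two sub-intervals are now roughly balanced. The added site $L_c$ will then be constructed so that its service area simultaneously covers the remaining $T_1$-demand and a significant portion of the remaining $T_2$-demand. Adding $L_c$ to the balanced set $\{L_a, L_b\}$ thus produces a large jump in the minimum because both sub-interval ratios move up together, whereas adding $L_c$ to $\{L_a\}$ only lifts the $T_1$ ratio to a modest value, since $T_1$ remains the binding sub-interval and the gain from improving $T_2$ is wasted in the min.

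The remaining work is routine: tabulate the four quantities $MSIU(L', T)$, $MSIU(L' \cup \{L_c\}, T)$, $MSIU(L'', T)$, and $MSIU(L'' \cup \{L_c\}, T)$ from the chosen coverage sets, and compare the two marginal differences to confirm a strict violation of the submodular inequality. Figure~\ref{MSIU_Example} can serve as the concrete geometric placement of the sites and demand points so that the required coverage pattern is realized by circular service areas of the prescribed radius $R$.

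The main step I expect to be delicate is choosing the coverage sets so that both (i) $\{L_a\}$ alone has $MSIU = 0$ (forcing $T_1$ to be binding) and (ii) $L_c$'s contribution to $T_1$ is small enough that $T_1$ stays binding after augmenting $L'$, while still being large enough that the union $\{L_a, L_b, L_c\}$ closes the gap in both sub-intervals to produce a strictly larger marginal gain on the $L''$ side. A minor secondary concern is realizability of the combinatorial coverage pattern under the radius-$R$ circle assumption of the Coarse-Grained Model, but this is not a real difficulty since the demand points and potential sites may be placed anywhere in the plane, so any inclusion pattern can be achieved by spacing the points sufficiently and centering each site's service circle appropriately; I will simply note this rather than labor the geometric details.
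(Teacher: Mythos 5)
Your proposal takes essentially the same route as the paper: both establish non-submodularity by exhibiting an explicit counterexample to the defining inequality, exploiting the fact that the $\min$ operator rewards an added site far more when the existing set is balanced across sub-intervals than when one sub-interval remains binding. Your sketched construction does instantiate correctly (e.g., three $T_1$-demand points with $L_b$ covering two and $L_c$ covering one, and two $T_2$-demand points with $L_a$ and $L_c$ covering one each, gives marginal gains $1/3 - 0 < 1 - 1/2$), so only the routine tabulation you deferred remains.
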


\begin{proof}

If \(MSIU({L, T})\) is submodular, then for any pair $L', L'' \in L$,   $L' \subseteq L''$ and $L_i, \in {L \setminus L''}$, \(MSIU(L' \cup \{L_i\}), T) - MSIU(L', T) \geq  MSIU((L'' \cup \{L_i), T\}) - MSIU(L'', T)\). In Figure \ref{MSIU_Example}, we provide a counterexample that demonstrates that that there exists
$L', L'' \in L$,   $L' \subseteq L''$ and $L_i, \in {L \setminus L''}$ such that $MSIU((L' \cup \{L_i), T\}) - MSIU(L', T) <  MSIU(L'' \cup \{L_i\}) - MSIU(L'')$.

Consider the instance shown in the Figure \ref{MSIU_Example}, where $L' = \{L_1, L_2\}$, $L'' = \{L_1, L_2, L4\}$ and $L_i = \{L_3\}$. The red squares indicate the locations of demand points . It may be seen from the diagram that 14 and 12 units of demand existed during the time sub-intervals $T_1$ and $T_2$ respectively. From the Figure \ref{MSIU_Example}, the following observations can be made: (i) the facilities in $L'$ can meet demand of 1 out of 14 units in $T_1$ and 2 out of 12 units during $T_2$, (ii) the facilities in $L'$ and $L_i$ can cover 3 out of 14 units of demand in $T_1$ and 6 out of 12 units of demand during $T_2$, (i) the facilities in  $L'''$ can cover 13 out of 14 units of demand in $T_1$ and 9 out of 12 units of demand during $T_2$, (i) the facilities in $L''$ and $L_i$ can cover 14 out of 14 units of demand in $T_1$ and 12 out of 12 units of demand during $T_2$, The results are tabulated in Table \ref{MSIUResults}. 

\begin{table}[tbh]
		\centering
		\setlength{\tabcolsep}{8pt}
        \renewcommand{\arraystretch}{1.3}
		\vspace{-2.0mm}
		\begin{adjustbox}{width = 0.47\textwidth}
		\begin{tabular} {| c | c | c| c|}  \hline 
			      & MSIU($\cdot$, $T_1$) & MSIU($\cdot$, $T_1$) & MSIU($\cdot$, $T_1 + T_2$)\\ \hline
               $SA(L')$ & 1/14 = .07  & 2/12 = .17 & min (.07, .17) = .07\\ \hline
			   $SA(L') \cup \{SA(L_i)\}$ & 3/14 = .21  & 6/12 = 0.5 & min (.21, 0.5) = ,21\\ \hline
			    $SA(L'')$ & 13/14 = .93  & 9/12 = .75& min (.93, .75) = .75\\ \hline
			   $SA(L'') \cup \{SA(L_i)\}$ & 14/14 = 1  & 12/12 = 1 & min (1, 1) = 1\\ \hline
		\end{tabular}
		\end{adjustbox}
        \label{MSIUResults}
        \caption{Incidences in time sub-intervals and the number of them covered by different subsets of location}
        \vspace{-10.0pt}
\end{table}

From the results in the Table \ref{MSIUResults}, it can be verified that $MSIU(L' \cup \{L_i\}) - MSIU(L') = .21 - .07 <  1 - .75 = MSIU(L'' \cup \{L_i\}) - MSIU(L'')$. Accordingly, this counterexample establishes that the $MSIU(\cdot, \cdot)$ is not a sub-modular function.
\end{proof}


\subsection{Fine-Grained Model and Analysis}

As discussed earlier, in addition to all the parameters used in the Coarsed-Grain Model, the Fine-Grained model takes into account two additional parameters (i) cost $LCost_{i,j}$ of building a transmission line connecting generating site $S_i$ to (demand) load point $L_j$, and (ii) maximum capacity of the transmission line $LCap_{i,j}$ connecting site $S_i$ to demand point $L_j$. Moreover, (i) energy demand at each of the demand (load) points is no longer required to be 0 or 1, and can take on any arbitrary demand values, (ii) the cost $SCost_i$ of building a site at $S_i$ is no longer required to be identical and of one unit, but allowed to take on any arbitrary value, (iii)  Variation of generation at site $S_i$ over time interval 0 to $T$ can take on arbitrary values, and 
(iv) variation of demand (load) at load point $L_j$ over the time interval 0 to $T$ can also take on any arbitrary values. The variables used in the Fine-Grained analysis and their explanations are summarized in Table \ref{table1}.

\begin{table}[tbh]
 \centering
 \begin{tabular}{ |c||c|} 
 \hline
Notation & Explanation \\ \hline \hline
$\{S_i\},$ & A set of {\em potential} renewable \\ 
$1 \leq i \leq n$ & energy generation sites \\ \hline
$SCost_i$  & Cost of building a site at $S_i$\\ \hline
$SCap_i$ & Maximum capacity of \\
 & generation at site $S_i$ \\ \hline
$B$ & Budget allocated for the design\\
& of renewable energy infrastructure  \\ \hline
$LCost_{i,j}$& Cost of building a transmission line\\
& connecting generating site $S_i$ to\\
& (demand) load point $L_j$ \\ \hline
$LCap_{i,j}$ & Maximum capacity of the \\
& transmission line connecting site $S_i$\\
& to demand point $L_j$\\ \hline

\end{tabular}
 \caption{Notations and their Explanations}
 \vspace{-2mm}
\label{table1}
\end{table}


    

In order to meet the largest fraction of demand subject to budget and capacity constraints, we utilize an Integer Linear Program (ILP) to find the optimal solution to the problem. However, in order to utilize ILP to find a solution, we need to set up the problem in a manner that is conducive to ILP solution. In the following we describe our set up and provide an ILP to find the optimal solution to the problem.


We introduce a binary indicator variable $x_i$ to indicate whether a facility is built at site $S_i$.
\vspace{-0.15in}
\[
    x_{i} = 
\begin{cases}
    1, \text{if a facility is built at site}~S_i \\
    0, \text{otherwise}
\end{cases}
\]
 Similarly, we use a binary indicator variable $y_{i, j}$ to indicate whether a transmission line $e_{i, j}$ (connecting generating site $S_i$ to the load point $L_j$) is built or not.

\[ y_{i, j}= 
\begin{cases}
    1,& \text{if transmission line} ~e_{i,j} \text{ is built}\\
    0,              & \text{otherwise}
\end{cases}
\]


Observe there is no point in building the transmission line $e_{i, j}$ if it is decided not to build a facility at the site $S_i$. Mathematically, it implies that if $x_i = 0$, then $y_{i, j}$ must also be 0. This condition may be ensured by the following constraint 
\[ y _{i, j} \leq x_i\]

The {\em Total Cost} of construction of facilities and transmission lines is given by \[ \sum_{i = 1}^n SCost_i  \cdot x_i + \sum_{i = 1}^n \sum_{j = 1}^m LCost_{i, j}  \cdot y_{i, j}\]

We require the construction cost of sites and transmission lines cannot exceed a budget $B$, and enforce this with the following constraint \[ \sum_{i = 1}^n SCost_i  \cdot x_i + \sum_{i = 1}^n \sum_{j = 1}^m LCost_{i, j}  \cdot y_{i, j} \leq B\]

In order to capture energy flow from the generating sites to the demand points, we create a bipartite graph where the nodes on one side of bi-partition correspond to the potential sites ($S_i$), and the other side of bi-partition corresponds to the demand points ($L_j$). We add a directed edge from every potential site $S_i$ to every demand point $L_j$, resulting in a {\em complete} bipartite graph. Each node corresponding to $S_i$ has a cost $SCost_i$ and a capacity value $SCap_i$ associated with it. Similarly, each link $e_{i, j}$ has two parameters, cost ($LCost_{i, j}$) and capacity ($LCap_{i, j}$) associated with it. Finally, we add two additional nodes, SuperSource ($SS$) and SuperLoad ($SL$), add a directed edge from $SS$ to every $S_i$, and direct edges from every $L_j$ to $SL$. The cost of these edges is taken to be zero. Let $Cap(SS \rightarrow S_i) = SCap_i  \cdot x_i$, as this capacity will be equal to 0 if the decision is made not to build a facility at st $S_i$, i.e., $x_i = 0$. Similarly, $Cap(S_{i} \rightarrow L_{j}) = LCap_{i,j} \cdot y_{i, j}$, as the capacity will be equal to 0 if the decision is made not to build the transmission line corresponding to $e_{i, j}$. The capacity of $Cap(L_j \rightarrow SL)$ link is set equal to the demand, $D_{L_j}$, at the demand point $L_j$, to ensure that energy supplied at $L_j$ doesn't exceed the demand at that load point.  With this formulation, if we compute the {\em maximum flow} from the SuperSource ($SS$) to the SuperLoad ($SL$), that provides the solution to the optimal site election problem. A graph with six potential sites and four demand points is shown in Figure \ref{flowdia1}.

\begin{figure}[tbh]
	\begin{center}
	\vspace{-4.0mm}
		\includegraphics[width = 0.5\textwidth, keepaspectratio]{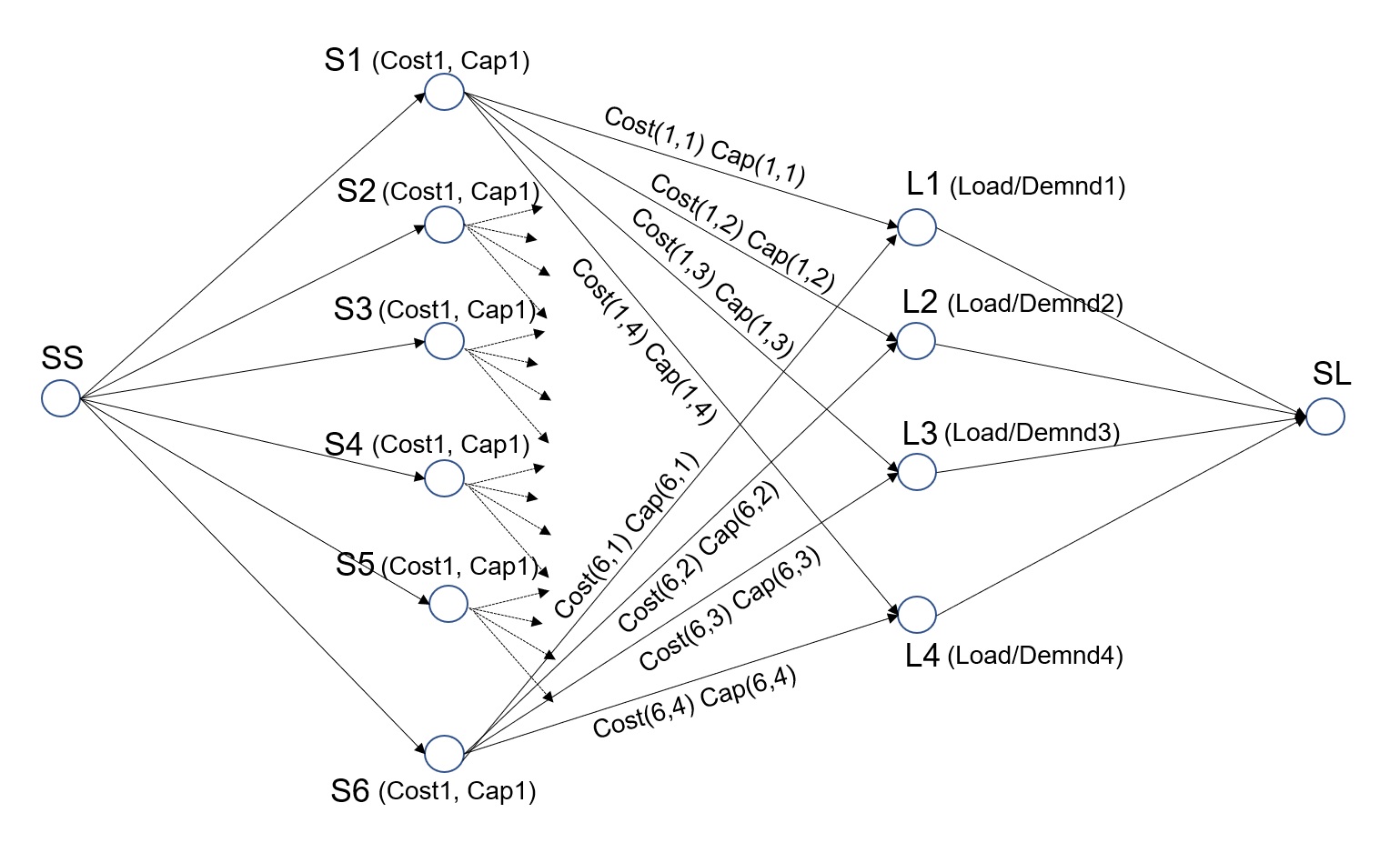}
	    \vspace{-4.0mm}
        \caption{Energy flow diagram from the potential sites to the load/demand points}
		\label{flowdia1}      
	\end{center}
    \vspace{-5mm}
\end{figure}

If there are $n$ potential sites and $m$ load (demand) points, the corresponding network graph as shown in Figure \ref{flowdia1} will have $(n + m + 2)$ nodes and  $n + m + n \cdot m$ directed edges. The nodes will correspond to $SS$, $SL$, $S_{i}, 1 \leq i \leq n$ and $L_{j}, 1 \leq j \leq m$. There will be three groups of edges:\\

\noindent
(i) $SS \rightarrow S_{i}, 1 \leq i \leq n$,\\ (ii)  $S_{i} \rightarrow L_{j}, 1 \leq i \leq n, 1 \leq j \leq m$\\ (iii) $L_{j} \rightarrow SL, 1 \leq j \leq m$,\\

\noindent
Energy flow on a link $S_i \rightarrow L_j$ will be denoted as $f(i \rightarrow j)$. Network flow constraints can be written as follows:\\

\noindent
(i) $\forall_{i}$, $f(SS \rightarrow S_{i}) \leq Cap(SS \rightarrow S_i) =SCap_i \cdot x_i$

\noindent
(ii) \[\forall_{i,j}, \text{ }f(S_i \rightarrow L_j) \leq Cap(S_i \rightarrow L_j) = LCap_{i,j} \cdot y_{i.j}\]

\noindent
(iii) $\forall_{j}$, $f(L_j \rightarrow SL) \leq D_{L_j}$

\noindent
(iv) {\em Total Flow, F} at the node $SS$: \[F = \sum_{i = 1}^n f(SS \rightarrow S_{i}), 1 \leq i \leq n\]

\noindent
(v) At node $S_{i}, 1 \leq i \leq n$: \vspace{-3mm}\[f(SS \rightarrow S_{i}) = \sum_{j = 1}^m f(S_{i} \rightarrow L_{j})\]

\vspace{-3mm}
\noindent
(vi) At node $L_{j}, 1 \leq i \leq m$: \vspace{-3mm}\[f(L_{j}\rightarrow SL) = \sum_{i = 1}^n f(S_{i} \rightarrow L_{j})\]

\noindent
Objective: \textit{Maximize} $F$ (\textit{Total Flow})\\

It may be noted that the ILP formulation above will provide the solution only for one one-time sub-interval during which the generation at the sites $S_i$ and demand at load points $L_j$ remain constant. However, the formulation above can be easily extended to take care of multiple sub-intervals, where both supply and demand might change by replicating the network shown in Figure \ref{flowdia1} as many times as there are time sub-interval and also replicating the ILP constraints with changed values for supply and demand during each of the sub-intervals.  A modified graph with four-time sub-intervals is shown in Figure \ref{flowdia2}.



\begin{figure}[tbh]
	\begin{center}
	\vspace{-4.0mm}
		\includegraphics[width = 0.35\textwidth, keepaspectratio]{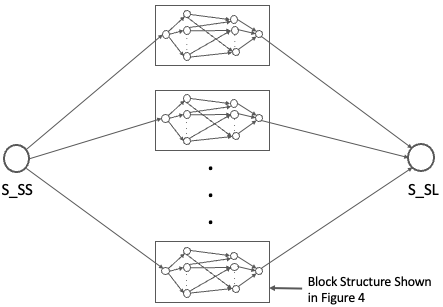}
        \caption{Energy flow diagram from the potential sites to the load/demand points}
		\label{flowdia2}      
	\end{center}
    \vspace{-5mm}
\end{figure}

\section{Experimental Results}

In this section, we present the results of our experimental evaluation of the site selection process. In our evaluation, we studied the impact of \textit{(i) budget, (ii) increase in supply, and (iii) demand} on the \textit{percentage of demand met}. Our experimentation was conducted on synthetic data. However, \textit{we made a significant effort to make the synthetic data as realistic as possible}. In that vain, we collected publicly available domain data from most of the operating solar farms in Arizona. There are 15 operating solar farms in Arizona generating from 18 to 400MW of power. The data related to the cost of setting up a solar farm and associated infrastructure is sparsely available in the public domain. The cost of setting up farms varied from 19 to 401 million dollars. The cost of setting up 45 miles of 500kV transmission line is approximated at 95 million dollars. We created our synthetic data using the above figures as a baseline. We created three groups of datasets, each comprising 100 members. These three groups were created to study the impacts of (i) budget, (ii) supply, and (iii) demand. Each group was further divided into ten sub-groups, where each sub-group corresponds to a solar farm design problem instance, with variations in budget, supply, and demand. Figures \ref{budget}, \ref{sourceinc}, and \ref{deminc} show the impact of budget, supply, and demand, respectively, on the \textit{percentage of demand met}.

In our experimentation, we used 300 datasets and each dataset had 14 solar farms. In alignment with our goal of creating synthetic data as realistic as possible, the data values used in our experiments were derived from the generation capacity of existing solar farms in Arizona. The demand values and the site building budget numbers were also collected from the same sources. The generation capacity of these solar farms varied from 15 MW to 403 MW. Each dataset had 9 demand points, and demands in these points varied from 0 to 4057.48 MW. The cost of building a site facility varied from 13.63 to 402.14 million dollars. The cost and the capacity of the transmission lines connecting site $S_i$ to demand (load) point $L_j$ varied from 1.27 to 117.30 million dollars and from 250 to 1000 MW, respectively.

\begin{figure}[tbh]
	\begin{center}
		\includegraphics[width = 0.48\textwidth, keepaspectratio]{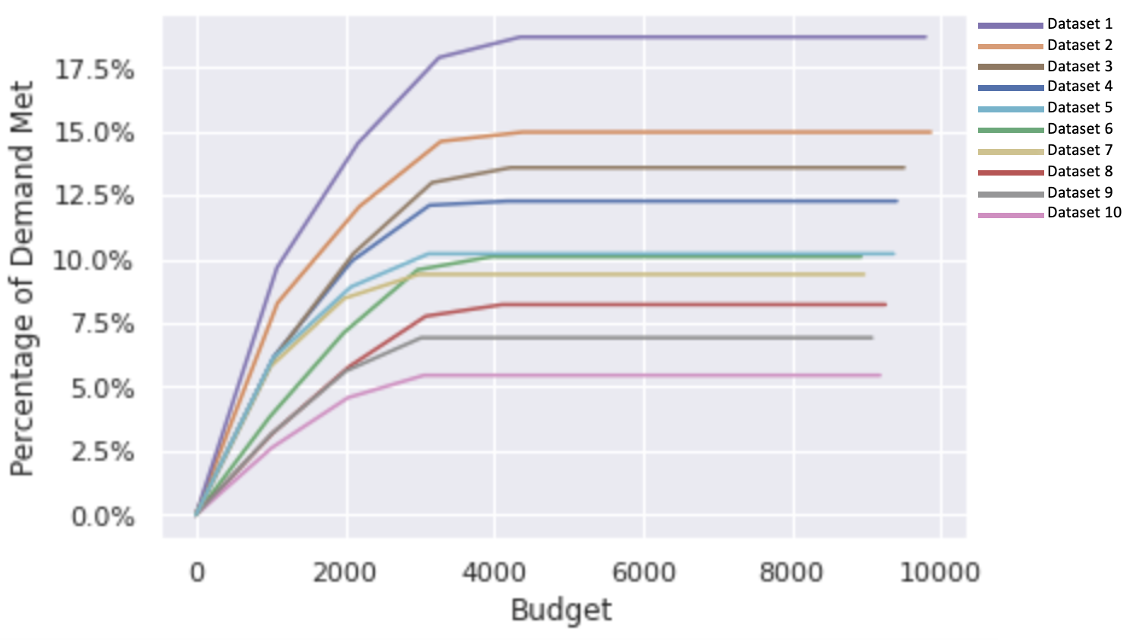}
        \caption{Impact of budget on the percentage of demand met}
		\label{budget}      
	\end{center}
        \vspace{-5mm}
\end{figure}

\begin{figure}[tbh]
	\begin{center}
		\includegraphics[width = 0.48\textwidth, keepaspectratio]{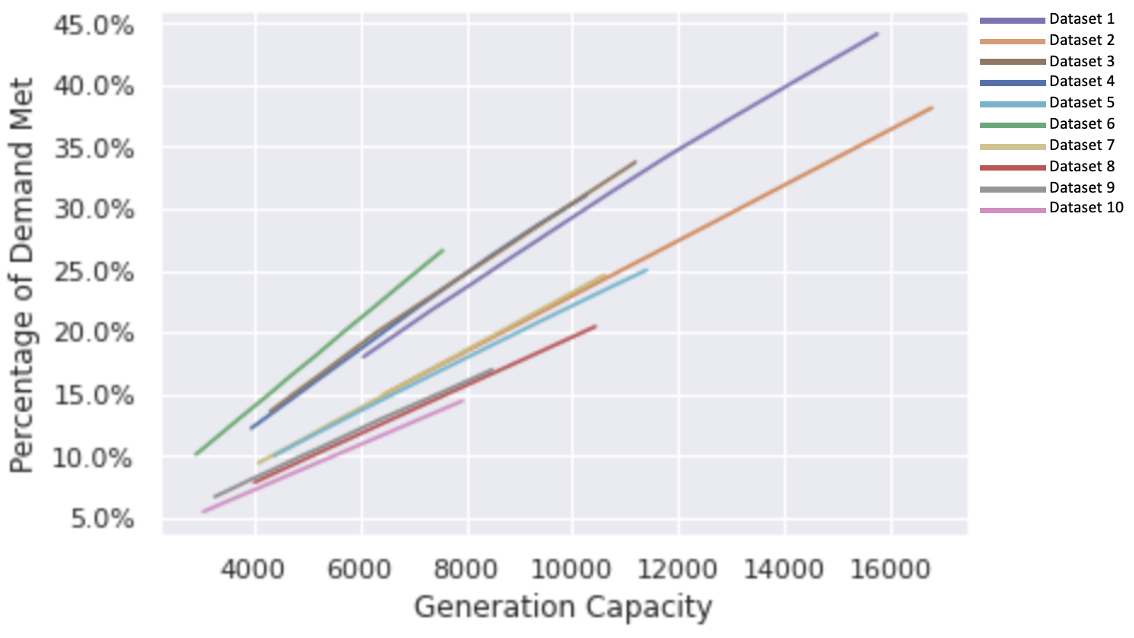}
        \caption{Impact of increase in the generation capacity at the sites}
		\label{sourceinc}      
	\end{center}
        \vspace{-5mm}
\end{figure}

\begin{figure}[tbh]
	\begin{center}
		\includegraphics[width = 0.48\textwidth, keepaspectratio]{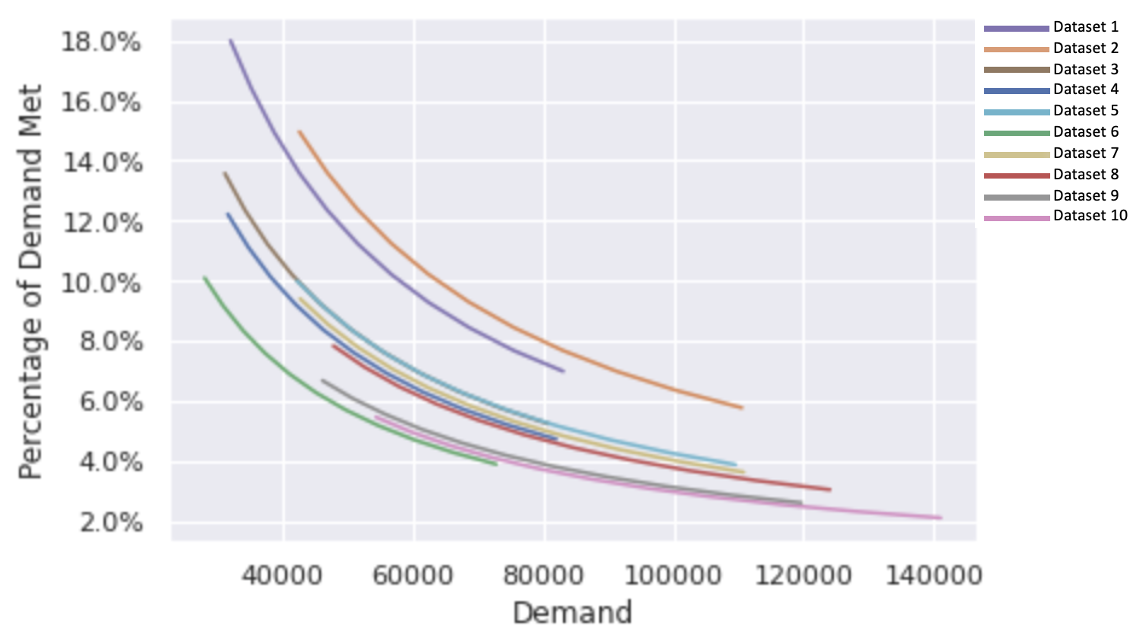}
        \caption{Impact of increase in the demand}
		\label{deminc}      
	\end{center}
        \vspace{-5mm}
\end{figure}

From public domain information made available by operating solar farms in Arizona, we collected data related to (i) site establishment cost, (ii) maximum site capacity, and (iii) monthly generation of energy. From SRP, we collected data on (iv) transmission line capacity, (v) transmission line construction cost, and (vi) monthly variation of energy demand. To create our synthetic data, we utilized a maximum variance vector, $\big[\sigma_i\big] {1 \leq i \leq 6}$, for the six parameters indicated earlier. We created 300 datasets by tweaking the original dataset with the variance setup and including a range of budget values, generation capacities, and demand values for generating the plots in Figures \ref{budget}, \ref{sourceinc}, and \ref{deminc}. It may be noted that the difference between the two plots in these figures corresponding to datasets $i$ and $j$ is due to the difference in their corresponding variance parameters sampled from $\big[\sigma_i\big] ({1 \leq i \leq 6})$.

The 300 datasets were split into three sets of 100 each. The impact of variation of budget, on the first set is shown in Figure \ref{budget}. The impact of variation of generation capacities of the sites on the second set is shown in Figure \ref{sourceinc}. The impact of variation of demand on the third set is shown in Figure \ref{deminc}. Figure \ref{budget} shows 10 different plots where each curve corresponds to results obtained from ten different datasets. Similar settings apply for the curves in Figures \ref{sourceinc} and \ref{deminc}.

It may be observed that each curve in Figure \ref{budget} reaches a saturation point for a certain value of budget, and a further increase in budget has no impact on the percentage of demand met. Despite its counter-intuitive nature, this phenomenon occurs because each power station and transmission line has fixed generation and transmission capacities. Once all are activated, only a certain amount of energy can be relayed from the generation sites to the demand points. At that point in time, only a certain percentage of the total demand can be met. Further, increase in budget is not going to change that percentage.

In the experiments shown in Figure \ref{sourceinc}, the \textit{demand met} consistently rose with increased generation capacity, indicating that transmission line capacity wasn't a limiting factor. It may be observed that some of the curves stopped after a smaller value of generation capacity, whereas others continued for a larger value. This is due to the fact that the variation of the generation capacity of a site is limited by its maximum capacity, and the maximum capacity of some sites is much smaller than other sites. 

In all our experimentations presented in Figure \ref{deminc}, the percentage of demand met decreased with an increase in demand. This is quite intuitive, although we recognize that due to demands at different demand points being different, we could have seen somewhat different behavior. It just so happened that we did not observe any counter-intuitive phenomenon.

Our experiments were conducted on a platform running Ubuntu 22.04 LTS with an AMD Ryzen 9 4900HS CPU equipped with 16GB RAM. For code deployment, we used Python 3.10 with the Gurobi 10.0.1 library. The computation time for each dataset in our experiments never exceeded five minutes.

\section{Conclusion}
\vspace{-2mm}
In this paper, we presented techniques for the optimal selection of renewable energy sites that goes beyond conventional ways of selecting sites, utilizing GIS and AHP. We propose a two-phase approach where phase I, using GIS and AHP, does the preliminary selection, and phase II, utilizing mathematical programming, does the final selection. We conducted significant experimentation using synthetic data that attempts to mimic real data associated with currently operating solar farms in Arizona.

\printbibliography

@article{guaita2019analyzing,
  title={Analyzing territory for the sustainable development of solar photovoltaic power using GIS databases},
  author={Guaita-Pradas, Inmaculada and Marques-Perez, Inmaculada and Gallego, Aurea and Segura, Baldomero},
  journal={Environmental monitoring and assessment},
  volume={191},
  pages={1--17},
  year={2019},
  publisher={Springer}
}

@article{turk2021multi,
  title={Multi-criteria of PV solar site selection problem using GIS-intuitionistic fuzzy based approach in Erzurum province/Turkey},
  author={T{\"u}rk, Seda and Ko{\c{c}}, Ahmet and {\c{S}}ahin, G{\"o}khan},
  journal={Scientific Reports},
  volume={11},
  number={1},
  pages={5034},
  year={2021},
  publisher={Nature Publishing Group}
}

@article{kereush2017determining,
  title={Determining criteria for optimal site selection for solar power plants},
  author={Kereush, Daria and Perovych, Igor},
  journal={Geomatics, Land management and Landscape},
  volume={4},
  pages={39--54},
  year={2017},
  publisher={De Gruyter}
}

@article{deveci2021evaluation,
  title={Evaluation of criteria for site selection of solar photovoltaic (PV) projects using fuzzy logarithmic additive estimation of weight coefficients},
  author={Deveci, Muhammet and Cali, Umit and Pamucar, Dragan},
  journal={Energy Reports},
  volume={7},
  pages={8805--8824},
  year={2021},
  publisher={Elsevier}
}

@article{mierzwiak2017multi,
  title={Multi-criteria analysis for solar farm location suitability},
  author={Mierzwiak, Michal and Calka, Beata},
  journal={Reports on Geodesy and Geoinformatics},
  volume={104},
  number={1},
  pages={20--32},
  year={2017},
  publisher={Wrocław University of Environmental and Life Sciences}
}

@article{barzehkar2021decision,
  title={Decision support tools for wind and solar farm site selection in Isfahan Province, Iran},
  author={Barzehkar, Mojtaba and Parnell, Kevin E. and Dinan, Naghmeh Mobarghaee and Brodie, Graham},
  journal={Clean Technologies and Environmental Policy},
  volume={23},
  pages={1179--1195},
  year={2021},
  publisher={Springer}
}

@article{koc2019multi,
  title={Multi-criteria of wind-solar site selection problem using a GIS-AHP-based approach with an application in Igdir Province/Turkey},
  author={Koc, Ahmet and Turk, Seda and Şahin, Gökhan},
  journal={Environmental Science and Pollution Research},
  volume={26},
  year={2019},
  pages={32298--32310}
}

@book{demers2009fundamentals,
  title={Fundamentals of Geographic Information Systems},
  author={DeMers, Michael},
  year={2009},
  edition={4th},
  publisher={John Wiley \& Sons, Inc.},
  isbn={978-0-470-12906-7}
}

@book{saaty1994fundamentals,
  title={Fundamentals of Decision Making and Priority Theory with the Analytic Hierarchy Process},
  author={Saaty, Thomas L.},
  year={1994},
  publisher={RWS},
  address={Pittsburgh},
  isbn={0-9620317-6-3},
  note={"A thorough exposition of the theoretical aspects of AHP."}
}

@book{munier2021uses,
  title={Uses and Limitations of the AHP Method: A Non-Mathematical and Rational Analysis},
  author={Munier, Nolberto},
  year={2021},
  publisher={Springer Nature},
  address={Switzerland},
  doi={10.1007/978-3-030-60392-2},
  isbn={978-3-030-60392-2},
  series={Management for Professionals}
}

@book{schrijver1998theory,
  title={Theory of Linear and Integer Programming},
  author={Schrijver, Alexander},
  year={1998},
  publisher={John Wiley \& Sons}
}

@book{vazirani2001approximation,
  title={Approximation algorithms},
  author={Vazirani, Vijay V},
  volume={1},
  year={2001},
  publisher={Springer}
}

@article{nemhauser1978analysis,
  title={An analysis of approximations for maximizing submodular set functions—I},
  author={Nemhauser, George L and Wolsey, Laurence A and Fisher, Marshall L},
  journal={Mathematical programming},
  volume={14},
  pages={265--294},
  year={1978},
  publisher={Springer}
}

@online{srp2023,
  author = {SRP},
  title = {Sustainability Overview at SRP},
  year = {2023},
  url = {https://www.srpnet.com/grid-water-management/sustainability-environment/sustainability-overview},
  note = {},
}

@online{srpGridWater2023,
  author = {SRP},
  title = {Grid-Water Management at SRP},
  year = {2023},
  url = {https://www.srpnet.com/grid-water-management/grid-management/improvement-projects/coal-communities-transition},
  note = {},
}

@online{srpISP2023,
  author = {SRP},
  title = {Inputs for the ISP Study Plan},
  year = {2021},
  url = {https://www.srpnet.com/grid-water-management/grid-management/integrated-system-plan},
  note = {},
}

@online{srpNTP2023,
  author = {SRP},
  title = {SRP Summer Stakeholder Series - Near-Term Planning: Part 2},
  year = {2021},
  url = {https://www.srpnet.com/assets/srpnet/pdf/grid-water-management/grid-management/isp/ISP-Meeting-4-Presentation.pdf},
  note = {},
}

\end{document}